\title{A ``Symbolic'' Representation of Object-Nets
}
\author{
  Michael K\"ohler-Bu\ss{}meier\orcidID{0000-0002-3074-4145}
  \inst{1}
  \and \\
  Lorenzo Capra\orcidID{0000-0002-1029-1169}%
  \inst{2}
}
\institute{%
  University of Applied Science Hamburg 
  \\  Berliner Tor 7, D-20099 Hamburg, Germany
  \\  \email{michael.koehler-bussmeier@haw-hamburg.de} 
  \and
  Dipartimento di Informatica, Universit\`a  degli Studi di Milano
  \\  Via Celoria 18, Milan, Italy 
  \\  \email{capra@di.unimi.it}
}
\newcommand{\EOS}{\textsc{Eos}}
\newcommand{\ms}[1]{\mathbf{#1}}
\newcommand{\labtr}[2][{}]{\xrightarrow[#1]{#2}}
\newcommand{\labtra}[2]{\xrightarrow[#2]{#1}}
\newcommand{\Nat}{\mathbb{N}}
\newcommand{\sn}[1]{\widehat{#1}} 
\newcommand{\on}[1]{#1}
\newcommand{\postset}[1]{{{#1}^\bullet}}
\newcommand{\preset}[1]{{{}^\bullet #1}}
\newcommand{\pre}{\mathbf{pre}} 
\newcommand{\post}{\mathbf{post}}
\newcommand{\Nets}{\mathcal{N}} 
\newcommand{\impl}{\Longrightarrow}
\newcommand{\key}[1]{\emph{#1}}
\newcommand{\morph}{\varphi}
\newcommand\singleFigWide[3][1.0]{%
  \begin{figure}[htbp]
    \centerline{\includegraphics[width=#1\textwidth]{bilder/#2}}
    \caption{\label{fig:#2}#3}
  \end{figure}%
}
\begin{document}

\maketitle

\begin{abstract}
  In this contribution we extend the concept of a Petri net morphism
  to Elementary Object Systems (\EOS{}).  \EOS{} are a
  nets-within-nets formalism, i.e. we allow the tokens of a Petri net
  to be Petri nets again.
  This nested structure has the consequence that even systems defined
  by very small Petri nets have a quite huge reachability graph.
  In this contribution we use automorphism to describe symmetries of
  the Petri net topology.  Since these symmetries carry over to
  markings as well this leads to a condensed state space, too.

  \begin{keywords}
    Automorphism,
    canonical representation,
    nets within nets,
    nets as tokens,
    state space reductions,
    symmetry 
  \end{keywords}
\end{abstract}

\section{Exploiting Symmetry and Canonical Representations}

In this paper we study Elementary Object Systems (\EOS{})
\cite{Koehler14-fi-lam} a Nets-within-Nets formalism as proposed by
Valk \cite{Valk-acpn}, i.e., we allow the tokens of a Petri net to be
Petri nets again.  Due to the nesting structure many of the classical
decision problems, like reachability and liveness, become undecidable
for \EOS{}.

From a complexity perspective we have studied these problems for
\emph{safe \textsc{Eos}}
\cite{koehler+10-fi,Koehler+11-fi,Koehler14-fi-lam} where markings are
restricted to \emph{sets} (i.e., places are either marked or
unmarked).  More precisely: All problems that are expressible in LTL
or CTL, which includes reachability and liveness, are
\textsc{PSpace}-complete.  This means that in terms of complexity
theory {safe \textsc{Eos}} are no more complex than safe place
transition nets (p/t nets).
But, a look at the details shows a difference that is pratically
relevant: For safe p/t nets it is known that whenever there are $n$
places, then the number of reachable states is bounded by $O(2^{n})$;
but, for safe \EOS{} the number of reachable states is in
$O(2^{(n^2)})$ -- a quite drastic increase.
Therefore, our main goal is to derive a \emph{condensed} state space
for \EOS{}, were `condensed' is expressed as a factorisation modulo an
equivalence.

In this contribution we extend the concept of a Petri net morphism to
Elementary Object Systems (\EOS{}).  \EOS{} are a nets-within-nets
formalism.  Here, we use automorphism to describe symmetries of the
Petri net topology.  Since these symmetries carry over to markings as
well this leads to a condensed state space, too.  In our approach
these symmetries are introduced very naturally to the representation
of the state space using \emph{canonical representations} of
markings.

The paper has the following structure.
Section~\ref{sec:nin} introduces base nets-within-nets (\EOS{}).
In Section~\ref{sec:sym} we 
define a symbolic representation
of the \EOS{} structure.
The work closes with a conclusion and outlook.

\section{Nets-within-Nets, EOS}
\label{sec:nin}

Object nets \cite{Koehler+04-atpn,koehler+09-fi,Koehler14-fi-lam},
follow the \emph{nets-within-nets} paradigm as proposed by Valk
\cite{Valk-acpn}.
Other approaches adapting the nets-within-nets approach are nested
nets \cite{Lomazova00}, mobile predicate/transition nets \cite{xu+00},
$\mathrm{PN}^2$ \cite{Hiraishi02}, hypernets \cite{hypernets04}, and
adaptive workflow nets \cite{vanHee+06-adaptive-wf}. There are
relationships to Rewritable Petri nets \cite{lcEPTCS21} and
Reconfigurable Petri Nets \cite{Padberg-Kahloul-2018}.
Object Nets can be seen as the Petri net perspective on contextual
change, in contrast to the Ambient Calculus \cite{Cardelli+99b} or the
$\pi$-calculus \cite{Milner92a}, which form the process algebra perspective.

\subsection{Petri Nets}
\label{sec:petrinets}

The definition of Petri nets relies on the notion of multisets.  A multiset
$\ms{m}$ on the set $D$ is a mapping $\ms{m}: D \to \Nat$.  
Multisets are generalisations of sets in the sense
that every subset of $D$ corresponds to a multiset $\ms{m}$ with
$\ms{m}(d) \leq 1$ for all $d \in D$.
The empty multiset $\mathbf{0}$ is defined as $\mathbf{0}(d)
= 0$ for all $d \in D$.  
Multiset addition for $\ms{m}_1, \ms{m}_2: D \to \Nat$ is defined
component-wise: $(\ms{m}_1 + \ms{m}_2)(d) := \ms{m}_1(d) +
\ms{m}_2(d)$.
Multiset-difference $\ms{m}_1 - \ms{m}_2$ is defined by $(\ms{m}_1 -
\ms{m}_2)(d) := \max(\ms{m}_1(d) - \ms{m}_2(d), 0)$.
We use common notations for the cardinality of a multiset $|\ms{m}| :=
\sum_{d \in D} \ms{m}(d)$ and multiset ordering $\ms{m}_1 \leq
\ms{m}_2$, where the partial order $\leq$ is defined by $\ms{m}_1 \leq
\ms{m}_2 \iff \forall d \in D: \ms{m}_1(d) \leq \ms{m}_2(d)$.

A multiset $\ms{m}$ is finite if $|\ms{m}| < \infty$.  The set of all
finite multisets over the set $D$ is denoted $\mathit{MS}(D)$.
The set $\mathit{MS}(D)$ naturally forms a monoid with multiset
addition $+$ and the empty multiset $\mathbf{0}$.  Multisets can be
identified with the commutative monoid structure $(\mathit{MS}(D),
+, 0)$.  Multisets are the free commutative monoid over $D$ since every
multiset has the unique representation in the form $\ms{m} = \sum_{d
  \in D} \ms{m}(d) \cdot d$, where $\ms{m}(d)$ denotes the multiplicity
of $d$.  Multisets can also be represented as a formal sum in the form
$\ms{m} = \sum_{i=1}^{n} x_i$, where $x_i \in D$.

Any mapping $f: D \to D'$ is  extended to a multiset homomorphism $f^\sharp:
\mathit{MS}(D) \to \mathit{MS}(D')$: \(
f^\sharp\left(\sum_{i=1}^{n} x_i\right) = \sum_{i=1}^n f(x_i) \).
This includes the special case $f^\sharp(\mathbf{0}) = \mathbf{0}$.  We simply
write $f$ to denote the mapping $f^\sharp$. The notation is in
accordance with the set-theoretic notation $f(A) = \{ f(a) \mid a \in
A \}$.

\begin{definition}  
  A \key{p/t net} $N$ is a tuple \( N = (P, T, \pre,\post), \) such
  that $P$ is a set of places, $T$ is a set of transitions, with $P
  \cap T = \emptyset$, and $\pre, \post: T \to \mathit{MS}(P)$ are the
  pre- and post-condition functions.  A marking of $N$ is a multiset
  of places: $\ms{m} \in \mathit{MS}(P)$.  A {p/t net} with initial
  marking $\ms{m}_0$ is denoted $N = (P, T, \pre,\post, \ms{m}_0)$.
\end{definition}
We use the usual notation for nets such as $\preset{x}$ for the set of
predecessors and $\postset{x}$ for the set of successors for a node $x
\in (P \cup T)$.
For $t\in T$ we have
$\preset{t} = \{ p \in P \mid \pre(t)(p)>0 \}$ and
$\postset{t} = \{ p \in P \mid \post(t)(p)>0 \}$.
For $p\in P$ we have
$\preset{p} = \{ t \in T \mid \post(t)(p)>0 \}$ and
$\postset{p} = \{ t \in T \mid \pre(t)(p)>0 \}$.

A transition $t \in T$ of a p/t net $N$ is enabled in marking $\ms{m}$
iff $\forall p \in P: \ms{m}(p) \geq \pre(t)(p)$ holds.  The successor
marking when firing $t$ is $\ms{m}'(p) = \ms{m}(p) - \pre(t)(p) +
\post(t)(p)$ for all $p \in P$.
Using multiset notation enabling is expressed by $\ms{m} \geq \pre(t)$
and the successor marking is $\ms{m}' = \ms{m} - \pre(t) + \post(t)$.
We denote the enabling of $t$ in marking
$\ms{m}$ by $\ms{m} \labtra{t}{N}$. Firing of an enabled $t$ is denoted by
$\ms{m}\labtra{t}{N} \ms{m}'$.
The net $N$ is omitted if it is clear from the context.

Firing is extended to sequences $w \in T^*$ in the obvious way: 
(i) $\ms{m} \labtr{\epsilon} \ms{m}$; 
(ii) If $\ms{m} \labtr{w} \ms{m}'$ and $\ms{m}' \labtr{t} \ms{m}''$
hold, then we have $\ms{m}\labtr{w t} \ms{m}''$. 
We write $\ms{m}\labtr{*} \ms{m}'$ whenever there is some
$w \in T^*$ such that $\ms{m}\labtr{w} \ms{m}'$ holds.
The set of reachable markings is $\mathit{RS}(\ms{m}_0) := \{ \ms{m}
\mid \exists w \in T^*: \ms{m}_0 \labtr{w} \ms{m} \}$.

\subsection{Elementary Object Systems}
\label{sec:eos}

In the following we consider \key{Elementary Object System} (\EOS{})
\cite{Koehler14-fi-lam}, which have a two-levelled structure.
An elementary object system (\textsc{Eos}) is composed of a system
net, which is a p/t net $\sn{N} = (\sn{P}, \sn{T}, \pre, \post)$ and a
set of object nets $\Nets = \{\on{N}_{1}, \ldots, \on{N}_{n}\}$, which
are p/t nets given as $\on{N} = (\on{P}_{\on{N}}, \on{T}_{\on{N}},
\pre_{\on{N}}, \post_{\on{N}})$, where $\on{N} \in \Nets$.  In
extension we assume that all sets of nodes (places and transitions)
are pairwise disjoint.  Moreover we assume $\sn{N} \not\in \Nets$ and
the existence of the object net $\bullet \in \Nets$, which
has no places and no transitions and is used to model anonymous, so
called black tokens.

\paragraph{Typing}

The system net places are typed by the mapping $d: \sn{P} \to \Nets$
with the meaning, that a place $\sn{p} \in \sn{P}$ of the system net
with $d(\sn{p}) = \on{N}$ may contain only net-tokens of the object
net type $\on{N}$.\footnote{In some sense, net-tokens are object nets
  with their own marking.
  However, net-tokens should not be considered as \emph{instances} of an
  object net (as in object-oriented programming), since net-tokens do
  not have an identity.  This is reflected by the fact that the firing
  rule joins and distributes the net-tokens' markings.

  Instead, all net-tokes of an object net act as a \emph{collective}
  entity, like a group.  This group can be considered as an object
  with identy -- an object with its state distributed over the
  net-tokens.
  For in in-depth discussion of this semantics cf.~\cite{Valk-acpn}. }
No place of the system net is mapped to the system net itself since
$\sn{N} \not\in \Nets$.

A typing is called \key{conservative} iff for each place $\sn{p}$ in the
preset of a system net transition $\sn{t}$ such that $d(\sn{p})
\not= \bullet$ there is place in the postset being of the same type:
$(d(\preset{\sn{t}}) \cup \{\bullet\}) \;\subseteq\;
(d(\postset{\sn{t}}) \cup \{\bullet\})$.
An \textsc{Eos} is \key{conservative} iff its typing $d$ is.

An \textsc{Eos} is \key{p/t-like} iff it has only places for
black tokens: $d(\sn{P})= \{\bullet\}$.

\paragraph{Nested Markings}

Since the tokens of an \textsc{Eos} are instances of object nets, a
\key{marking} of an \textsc{Eos} is a \emph{nested} multiset.  A
marking of a \textsc{Eos} $\mathit{OS}$ is denoted $\mu =
\sum_{k=1}^{|\mu|} (\sn{p}_k, \on{M}_k)$, where $\sn{p}_k$ is a place
of the system net and $\on{M}_k$ is the marking of a net-token of
type $d(\sn{p}_k)$.  To emphasize nesting, the marks are also
denoted as $\mu = \sum_{k=1}^{|\mu|} \sn{p}_k[\on{M}_k]$.  Markings of
the form $\sn{p}[\mathbf{0}]$ with $d(\sn{p}) = \bullet$ are
abbreviated as $\sn{p}[]$.

The set of all markings which are syntactically consistent with the
typing $d$ is denoted $\mathcal{M}$, where $d^{-1}(\on{N}) \subseteq
\sn{P}$ is the set of system net places of the type $\on{N}$:
\begin{equation}
  \label{eq:nestedmultisets}
  \mathcal{M}
  := \mathit{MS}\left(
    \bigcup\nolimits_{\on{N} \in \Nets} 
    \left(d^{-1}(\on{N}) \times \mathit{MS}(\on{P}_{\on{N}}) \right)\right)  
\end{equation}

We define the partial order $\sqsubseteq$ on nested multisets by setting
$\mu_1 \sqsubseteq \mu_2$ iff $\exists \mu: \mu_2 = \mu_1 + \mu$.
A more liberal variant is the order $\preceq$ defined by:
\begin{equation}
  \label{eq:poset-nestedmultisets}
  \begin{array}{rl}
    \alpha \preceq \beta \quad\iff & 
                                     \alpha = \sum_{i=1}^{m} \sn{a}_i[\on{A}_i]
                                     \land
                                     \beta = \sum_{j=1}^{n} \sn{b}_j[\on{B}_j]
                                     \land {}
    \\&
    \exists \text{ injection } f: \{1,...,m\} \to \{1,...,n\}:
    \\&
    \forall 1 \leq i \leq m:
    \sn{a}_i = \sn{b}_{f(i)} \land
    \on{A}_i \leq  \on{B}_{f(i)}
  \end{array}
\end{equation}  

For $ \alpha \preceq \beta$ the injection $f$ generates a sub-marking
of $\beta$ which is denoted $f(\alpha) = \sum_{i=1}^{m}
\sn{a}_{f(i)}[\on{A}_{f(i)}]$.
Note that $\alpha \sqsubseteq \beta$ is a special case of $\alpha
\preceq \beta$, where $\on{A}_i \leq \on{B}_{f(i)}$ is restricted to
$\on{A}_i = \on{B}_{f(i)}$.

\paragraph{Events}

Analogously to markings, which are nested multisets $\mu$, the events
of an \textsc{Eos} are also nested.  An \textsc{Eos} allows three
different kinds of events -- as illustrated by the 
\textsc{Eos} in Fig.~\ref{fig:eos}.

\begin{figure}[htbp]
  \centerline{\includegraphics[width=0.99\textwidth]{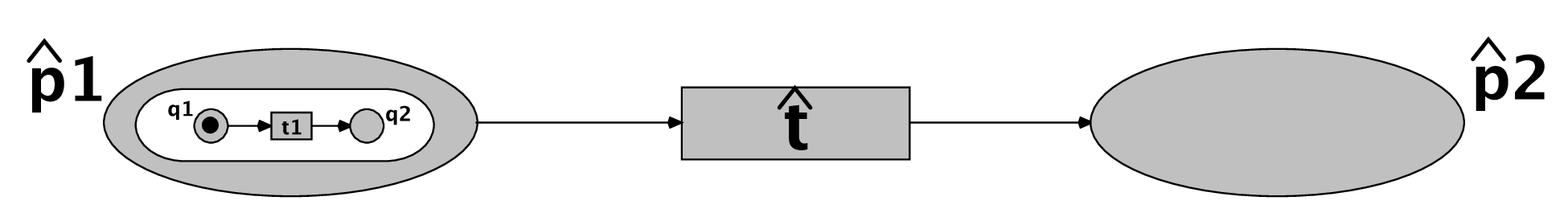}}
  \caption{\label{fig:eos} An Elementary Object Net System (\EOS)}
\end{figure}%

\begin{enumerate}
\item System-autonomous: The system net transition $\sn{t}$ fires
  autonomously which moves the net-token from $\sn{p}_1$ to $\sn{p}_2$
  without changing its marking.

\item Object-autonomous: The object net fires transition $\on{t}_1$,
  which ``moves'' the black token from $\on{q}_1$ to $\on{q}_2$.  The object
  net itself remains at its location $\sn{p}_1$.
\item Synchronisation: The system net transition $\sn{t}$ fires
  synchronously with $\on{t}_1$ in the object net.  Whenever
  synchronisation is demanded, autonomous actions are forbidden.
\end{enumerate}

The  set of events is denoted $\Theta$.
Events are formalised as a pair $\sn{\tau}[\vartheta]$, where
$\sn{\tau}$ is either the transition that fires in the system net or a
special ``idle'' transition $\mathit{id}_{\sn{p}}$ (cf. below); and
$\vartheta$ is a function such that $\vartheta(\on{N})$ is the
multiset of transitions, which have to fire synchronously with
$\sn{\tau}$, (i.e.\ for each object net $\on{N} \in \Nets$ we have
$\vartheta(\on{N}) \in \mathit{MS}(\on{T}_{\on{N}})$).\footnote{ In
  the graphical representation the events are generated by transition
  inscriptions.  For each object net $\on{N} \in \Nets$ a system net
  transition $\sn{t}$ is labelled with a multiset of channels
  $\sn{l}(\sn{t})(\on{N}) = \mathit{ch}_1 + \cdots + \mathit{ch}_n $,
  depicted as $\langle \on{N}\textit{:ch}_1, \on{N}\textit{:ch}_2,
  \ldots \rangle$.
  Similarily, an object net transition $\on{t}$ may be labelled with a
  channel $\on{l}_{\on{N}} (\on{t}) = \mathit{ch}$ -- depicted as
  $\langle \textit{:ch} \rangle$ whenever there is such a label.
  We obtain an event $\sn{t}[\vartheta]$ by setting $\vartheta(\on{N})
  := \on{t}_1 + \cdots + \on{t}_n$ to be any transition multiset such
  that the labels match: \( \on{l}_{\on{N}} (\on{t}_1) + \cdots +
  \on{l}_{\on{N}} (\on{t}_n) = \sn{l}(\sn{t})(\on{N}) \).  }

In general $\sn{\tau}[\vartheta]$ describes a synchronisation, but
autonomous events are special subcases:
Obviously, a system-autonomous event is the special case, where
$\vartheta = \mathbf{0}$ with $\mathbf{0}(\on{N}) = \mathbf{0}$ for
all object nets $\on{N}$.
To describe an object-autonomous event we assume the set of \emph{idle
  transitions} $\{ \mathit{id}_{\sn{p}} \mid \sn{p} \in \sn{P} \}$,
where $\mathit{id}_{\sn{p}}$ formalises object-autonomous firing on
the place $\sn{p}$:
\begin{enumerate}
\item Each idle transition $\mathit{id}_{\sn{p}}$ has $\sn{p}$ as a
  side condition: $\pre(\mathit{id}_{\sn{p}}) =
  \post(\mathit{id}_{\sn{p}}) := \sn{p}$.

\item Each idle transition $\mathit{id}_{\sn{p}}$ synchronises only
  with transitions from $\on{N} = d(\sn{p})$:
  \begin{equation*}
    \begin{array}{rl}
      \forall \sn{\tau}[\vartheta]  \in    \Theta:
      \; \sn{\tau} = \mathit{id}_{\sn{p}}
      \;\;\;\impl\;\;\;
      \forall \on{N} \in \Nets: 
        &
          (
          \vartheta(\on{N}) \not= \mathbf{0}
          \iff
          \on{N} = d(\sn{p})
          )
    \end{array}
  \end{equation*}
\end{enumerate}

\begin{definition}[Elementary Object System, EOS]
  \label{def:EOS}
  An elementary object system (\textsc{Eos}) is a tuple 
  \(
  \mathit{OS} = (\sn{N}, \Nets, d, \Theta),  
  \)
  where:
  \begin{enumerate}
  \item $\sn{N}$ is a p/t net, called the \key{system net}.   
  \item $\Nets$ is a finite set of disjoint p/t nets, called
    \key{object nets}.
  \item $d: \sn{P} \to \Nets$ is the \key{typing} of the system net places.
  \item $\Theta$ is the set of \key{events}.
  \end{enumerate}

  An \textsc{Eos} $\mathit{OS}$ with initial marking 
  $ \mu_{0}$ is a marked \EOS{}.
  We use the term \EOS{} both for marked and unmarked systems.
\end{definition}

\begin{example}
  Figure~\ref{fig:eos-s8} shows an \textsc{Eos} with the system
  net $\sn{N}$ and the object nets $\Nets = \{\on{N}_1, \on{N}_2\}$.
  The system has four net-tokens: two on place $\sn{p}_{1}$ and one on
  $\sn{p}_{2}$ and $\sn{p}_{3}$ each.  
  (Please ignore the net-tokens above the transition
  and on the places
  $\sn{p}_{4}$,
  $\sn{p}_{5}$ and $\sn{p}_{6}$ on the right; they are use below to illustrate the firing rule.) 
  The net-tokens on $\sn{p}_{1}$ and $\sn{p}_{2}$ share the same net
  structure, but have independent markings.

  \singleFigWide{eos-s8}{An Example  \textsc{EOS} 
    illustrating the firing of $\sn{t}[\on{N}_1 \mapsto t_1,\on{N}_2  \mapsto t_2]$ 
  }

  \noindent The system net is $\sn{N}= (\sn{P}, \sn{T}, \pre, \post)$,
  where $\sn{P} = \{\sn{p}_1,\ldots,\sn{p}_6\}$ and $\sn{T} = \{\sn{t}\}$.

  \noindent One object net is  $\on{N}_1= (\on{P}_1, \on{T}_1, \pre_1,
  \post_1)$ with $\on{P}_1 = \{a_1,b_1\}$ and $\on{T}_1 = \{t_1\}$.

  \noindent Another  object net is $\on{N}_2= (\on{P}_2, \on{T}_2, \pre_2,
  \post_2)$ with $\on{P}_2 = \{a_2,b_2,c_2\}$ and $\on{T}_2 =
  \{t_2\}$.

  \noindent The typing is $d(\sn{p}_1) = d(\sn{p}_2) = d(\sn{p}_4) = \on{N}_1$ and $d(\sn{p}_3)
  = d(\sn{p}_5) = d(\sn{p}_6) = \on{N}_2$.

  \noindent We have only one event: $\Theta = \{ \sn{t}[
  \on{N}_1 \mapsto t_1, \on{N}_2 \mapsto t_2] \}$.

  \noindent The initial marking has two net-tokens on $\sn{p}_1$, one
  on $\sn{p}_2$, and one on $\sn{p}_3$:
  \[
    \mu  = \sn{p}_1[a_1 + b_1] + \sn{p}_1[\mathbf{0}] + \sn{p}_2[a_1]  + \sn{p}_3[a_2 + b_2]
  \]

  Note  that for Figure~\ref{fig:eos-s8} the structure is the same
  for the three net-tokens on $\sn{p}_1$ and $\sn{p}_2$ but the net-token
  markings are different.
\end{example}

\paragraph{Firing Rule}

The projection $\Pi^1$ on the first component abstracts from the
substructure of all net-tokens for a marking of an \textsc{Eos}:
\begin{equation}
  \label{eq:1}
  \Pi^1\left(\sum\nolimits_{k=1}^{n}
    \sn{p}_k[ \on{M}_k]   \right)
  := \sum\nolimits_{k=1}^{n} \sn{p}_k  
\end{equation}

The projection $\Pi^2_{\on{N}}$ on the second component is the sum of
all net-token markings $\on{M}_k$ of the type $\on{N} \in \Nets$,
ignoring their local distribution within the system net:
\begin{equation}
  \label{eq:2}
  \Pi^2_{\on{N}} \left(\sum\nolimits_{k=1}^{n}  
    \sn{p}_k[\on{M}_k]  \right)
  := \sum\nolimits_{k=1}^{n}
  \mathbf{1}_{\on{N}}(\sn{p}_k) \cdot \on{M}_k  
\end{equation}
where the indicator function $\mathbf{1}_{\on{N}}: \sn{P} \to \{0,1\}$
is   $\mathbf{1}_{\on{N}}(\sn{p}) = 1$ iff $d(\sn{p}) = \on{N}$.
Note  that $\Pi^2_{\on{N}}(\mu)$ results in a marking of the object
net $\on{N}$.

A system event $\sn{\tau}[\vartheta]$ removes net-tokens together with
their individual internal markings.  Firing the event replaces a
nested multiset $\lambda \in \mathcal{M}$ that is part of the current
marking $\mu$, i.e.\ $\lambda \sqsubseteq \mu$, by the nested multiset
$\rho$. Therefore the successor marking is $\mu' := (\mu - \lambda) +
\rho$.
The enabling condition is expressed by the \key{enabling predicate}
$\phi_{\mathit{OS}}$ (or just $\phi$ whenever $\mathit{OS}$ is clear
from the context):

\begin{equation}
  \label{eq:firepredicate}  
  \begin{array}{rl}
    \phi(\sn{\tau}[\vartheta], \lambda, \rho)
    &\iff
      \Pi^1(\lambda) = \pre(\sn{\tau}) \land
      \Pi^1(\rho) = \post(\sn{\tau}) \land{}
    \\ &
         \forall\on{N} \in \Nets: \Pi^2_{\on{N}}(\lambda)
         \geq \pre_{\on{N}}(\vartheta(\on{N})) \land{}
    \\ &
         \forall\on{N} \in \Nets:
         \Pi^2_{\on{N}}(\rho) = \Pi^2_{\on{N}}(\lambda) 
         - \pre_{\on{N}}(\vartheta(\on{N})) 
         + \post_{\on{N}}(\vartheta(\on{N})) 
  \end{array}
\end{equation}

With $\sn{M} = \Pi^1(\lambda)$ and $\sn{M}' = \Pi^1(\rho)$ as well as
$\on{M}_{\on{N}} = \Pi^2_{\on{N}}(\lambda)$ and $\on{M}'_{\on{N}} =
\Pi^2_{\on{N}}(\rho) $ for all $\on{N} \in \Nets$ the predicate $\phi$
has the following meaning:
\begin{enumerate}
\item The first conjunct expresses that the system net multiset
  $\sn{M}$ corresponds to the pre-condition of the system net
  transition $\sn{\tau}$, i.e.~$\sn{M} = \pre(\sn{\tau})$.
  
\item In turn, a multiset $\sn{M}'$ is produced, that corresponds to
  the post-set of $\sn{\tau}$.  
  
\item A multi-set $\vartheta(\on{N})$ of object net transitions is
  enabled if the sum $\on{M}_{\on{N}}$ of the net-token markings (of
  type $\on{N}$) enable it, i.e.\ $\on{M}_{\on{N}}\geq
  \pre_{\on{N}}(\vartheta(\on{N}))$.
  
\item The firing of $\sn{\tau}[\vartheta]$ must also obey the
  \key{object marking distribution condition}: $ \on{M}'_{\on{N}} =
  \on{M}_{\on{N}} - \pre_{\on{N}}(\vartheta(\on{N})) +
  \post_{\on{N}}(\vartheta(\on{N}))$, where
  $\post_{\on{N}}(\vartheta(\on{N})) -
  \pre_{\on{N}}(\vartheta(\on{N}))$ is the effect of the object net's
  transitions on the net-tokens.

\end{enumerate}

Note that conditions 1.\ and~2.\ assure that only net-tokens relevant
for the firing are included in $\lambda$ and $\rho$.  Conditions 3.\
and~4.\ allow for additional tokens in the net-tokens.

For system-autonomous events $\sn{t}[\mathbf{0}]$ the enabling
predicate $\phi$ can be simplified further.  We have
$\pre_{\on{N}}(\mathbf{0}(\on{N})) =
\post_{\on{N}}(\mathbf{0}(\on{N})) = \mathbf{0}$.  This ensures
$\Pi^2_{\on{N}}(\lambda) = \Pi^2_{\on{N}}(\rho)$, i.e.\ the sum of
markings in the copies of a net-token is preserved w.r.t.\ each type
$\on{N}$.  This condition ensures the existence of linear invariance
properties

Analogously, for an object-autonomous event we have an idle-transition
$\sn{\tau} = \mathit{id}_{\sn{p}}$ for the system net and the first
and the second conjunct is: $\Pi^1(\lambda) =
\pre(\mathit{id}_{\sn{p}}) = \sn{p} = \post(\mathit{id}_{\sn{p}}) =
\Pi^1(\rho)$.  So, there is an addend $\lambda = \sn{p}[ \on{M}]$ in
$\mu$ with $d(\sn{p})= \on{N}$ and $\on{M}$ enables
$\vartheta(\on{N})$.

\begin{definition}[Firing Rule]
  \label{def:eos-fire}
  Let $\mathit{OS}$ be an \textsc{Eos} and $\mu, \mu' \in \mathcal{M}$
  markings.  The event $\sn{\tau}[\vartheta]$ is enabled in $\mu$ for the
  mode $(\lambda, \rho) \in \mathcal{M}^{2}$ iff $\lambda \sqsubseteq \mu
  \land {\phi}(\sn{\tau}[\vartheta], \lambda, \rho)$ holds.

  An event $\sn{\tau}[\vartheta]$ that is enabled in $\mu$ for the mode
  $(\lambda, \rho)$ can fire: \( \mu
  \labtra{\sn{\tau}[\vartheta](\lambda, \rho)}{\mathit{OS}} \mu' \).  The
  resulting successor marking is defined as $\mu' = \mu - \lambda +
  \rho$.
\end{definition}

We write $\mu \labtra{\sn{\tau}[\vartheta]}{\mathit{OS}} \mu'$
whenever $\mu \labtra{\sn{\tau}[\vartheta](\lambda,
  \rho)}{\mathit{OS}} \mu'$ for some mode $(\lambda,
\rho)$.

Note that the firing rule makes no a-priori assumptions about how to
distribute the object net markings onto the generated
net-tokens. Therefore we need the mode $(\lambda, \rho)$ to formulate
the firing of $\sn{\tau}[\vartheta]$ in a functional way.

\begin{example}
  Consider the \textsc{Eos} of Figure~\ref{fig:eos-s8} again.  The
  current marking $\mu$ of the \textsc{Eos} enables $\sn{t}[\on{N}_1
  \mapsto t_1,\on{N}_2 \mapsto t_2]$ in the mode $(\lambda, \rho)$,
  where
  \[
    \begin{array}{rcl}
      \mu  &=& \sn{p}_1[ \mathbf{0}]
               + {\sn{p}_1[ a_1 + b_1] +  \sn{p}_2[ a_1]  + \sn{p}_3[ a_2 + b_2]}
               = \sn{p}_1[\mathbf{0}]     + \lambda
      \\
      \lambda  &=& \sn{p}_1[a_1 + b_1]  + \sn{p}_2[ a_1]  + \sn{p}_3[ a_2 + b_2]
      \\
      \rho &=&  \sn{p}_4[ a_1 + b_1 + b_1] + \sn{p}_5[ \mathbf{0}]  + \sn{p}_6[c_2]
    \end{array}
  \]

  The net-token markings are added by the projections
  $\Pi^2_{\on{N}}$ resulting in the markings
  $\Pi^2_{\on{N}}(\lambda)$.  The sub-synchronisation generates
  $\Pi^2_{\on{N}}(\rho)$. (The results are shown above and below the
  transition $\sn{t}$.)  After the synchronisation we obtain the successor
  marking $\mu'$ with net-tokens on $\sn{p}_4$, $\sn{p}_5$, 
  and $\sn{p}_6$ as shown
  in Figure~\ref{fig:eos-s8}:
  \[
    \begin{array}{rcl}
      \mu' &=&    (\mu - \lambda) + \rho
               = \sn{p}_1[ \mathbf{0}] + \rho
      \\
           &=&  \sn{p}_1[ \mathbf{0}]
               +  \sn{p}_4[ a_1 + b_1 + b_1] + \sn{p}_5[ \mathbf{0}]  + \sn{p}_6[  c_2]
    \end{array}
  \]
  Note, that we have only presented one mode $(\lambda, \rho)$ and
  that other modes are possible, too.
\end{example}

\section{\EOS{}-Automorphism and  Canonical Representation}
\label{sec:sym}

The pseudo-symbolic representation of \EOS{} that we will introduce
relies on established concepts like graph (auto)morphism and canonical
representative. In what follows, we implicitly employ multiset
homomorphism.

\begin{definition}
  \label{def:PT-morph}
  Given \key{p/t nets} $N$ and $N'$, a morphism between $N$ and $N'$
  is a pair $\morph=(\morph_t, \morph_p)$ of bijective maps $\morph_t
  : T_N \rightarrow T_{N'}, \morph_p : P_N \rightarrow P_{N'}$ such
  that:
  $$
  \forall t \in T_N : \ \pre_{N'}(\morph_t(t)) = \morph_p(\pre_N(t))
  \wedge \post_{N'}(\morph_t(t)) = \morph_p(\post_N(t))
  $$
  \noindent We use the notation $\morph : N \rightarrow N'$, and
  $N \cong N'$ means there exists $\morph : N \rightarrow N'$.

  \noindent A morphism between marked \key{p/t nets} $(N, \ms{m})$ and
  $(N', \ms{m'})$ is a \key{p/t} morphism
  $\morph : N \rightarrow N'$
  such that $\morph_p(\ms{m}) = \ms{m}'$.\\
  \noindent (We extend the notation introduced above to marked nets' morphism.)

  \medskip
  \noindent A morphism $\morph : N \rightarrow N$ is referred to as an
  automorphism: $\morph_p, \morph_t$ are permutations. The markings
  $\ms{m}_1$ and $\ms{m}_2$ of $N$ are said equivalent if and only if
  there exists a morphism
  $\morph : (N, \ms{m}_1) \rightarrow (N, \ms{m}_2)$. We denote this
  by $\ms{m}_1 \cong \ms{m}_2$. Furthermore, $\cong$ establishes an
  equivalence relation on the set of markings of $N$.

\end{definition}

A \key{p/t} morphism maintains the firing rule. Let $\morph : N \rightarrow N'$. 
\begin{equation}
  \ms{m}_1\labtra{t}{N} \ms{m}_2 \impl 
  \morph_p(\ms{m}_1)\labtra{\morph_t(t)}{N'} \morph_p(\ms{m}_2)
\end{equation}
The equivalence relation $\cong$ on the markings of a \key{p/t net}
$N$ is thus a congruence when considering transition firing.

\subsection{\EOS{}-Automorphism}

We provide a natural extension of automorphisms to \EOS{}, which
considers that \EOS{} components are {p/t nets}.

\begin{definition}
  \label{def:eos-automorph}
  Let \(\mathit{OS} = (\sn{N}, \Nets, d, \Theta)\) be an \EOS{}  (Def. \ref{def:EOS}). 

  An \key{\EOS{}-automorphism $\morph_{OS}$} 
  is a collection of \key{p/t} automorphisms
  \[
    \morph_{OS} = \big(\morph_{\sn{N}}: \sn{N} \to \sn{N},\,\, (\morph_{\on{N}} : \on{N} \to \on{N})_{N \in \Nets} \big)
  \]
  that preserve $d$ and $\Theta$:      
  \begin{enumerate}
  \item $\forall \sn{p} \in P_{\sn{N}}: \ d(\morph_{\sn{N}}(\sn{p})) = d(\sn{p})$
    \smallskip

  \item
    $\forall \sn{\tau}[\vartheta] \in \Theta: \ \exists \vartheta' :
    \morph_{\sn{N}}(\sn{\tau})[\vartheta'] \in \Theta \
    \wedge
    \forall \on{N}\in \Nets : 
    \vartheta'(\on{N}) = \morph_{\on{N}}(\vartheta(\on{N}))$ 
  \end{enumerate}
\end{definition}

Observe that the second condition in Definition
\ref{def:eos-automorph} implicitly establishes a permutation within
the events in $\Theta$.

We naturally extend the concept of \EOS{}-automorphism to \emph{marked} \EOS{}.
\begin{definition}
  \label{def:markedeos-automorph}
  An automorphism between
  \((\mathit{OS}, \mu)
  \) and \((\mathit{OS}, \mu')  
  \) is an \EOS{-}automorphism $\morph_{\mathit{OS}}$
  such that:
  \\  
  If $\mu = \sum_{k=1}^{|\mu|} \sn{p}_k[\on{M}_k]$ then $\mu' = \morph_{\mathit{OS}}(\mu) := \sum_{k=1}^{|\mu|} \morph_{\sn{N}}(\sn{p}_k)\big[\morph_{d(\sn{p}_k)}(\on{M}_k)\big]$.

  The markings $\mu_1$ and $\mu_2$ of $\mathit{OS}$ are said to be \key{equivalent} if and only if there exists $\morph_{OS} :  (\mathit{OS}, \mu_1) \rightarrow (\mathit{OS}, \mu_2)$. We write (abusing notation) $\mu_1 \cong \mu_2$.
  
  Again, $\cong$ defines an equivalence relation on the set of markings of $\mathit{OS}$.
\end{definition}

The analogy of p/t automorphism and \EOS{}-automorphism also holds for
the \EOS{} firing rule (cf. Def. \ref{def:eos-fire}). Observe that an
automorphism $\morph_{\mathit{OS}}$ maintains the firing predicate
(\ref{eq:firepredicate}).

\begin{lemma}
  The firing predicate $\phi$ is invariant w.r.t.~an  automorphism $\morph_{\mathit{OS}}$: 

  \begin{equation}
    \label{eq:fireeqmorf}  
    \begin{array}{rl}
      \phi(\sn{\tau}[\vartheta], \lambda, \rho)
      &\; \;\impl \; \;  \phi(\morph_{\mathit{OS}}(\sn{\tau}[\vartheta]), \morph_{\mathit{OS}}(\lambda), \morph_{\mathit{OS}}(\rho))
    \end{array}
  \end{equation}
  Here, $\morph_{\mathit{OS}}(.)$ is the congruent homomorphic application of $\morph_{\mathit{OS}}$ components to \EOS{} events and markings.
\end{lemma}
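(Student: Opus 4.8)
The plan is to unfold the definition of the firing predicate $\phi$ (equation (\ref{eq:firepredicate})) on the image tuple $(\morph_{\mathit{OS}}(\sn{\tau}[\vartheta]), \morph_{\mathit{OS}}(\lambda), \morph_{\mathit{OS}}(\rho))$ and reduce each of its four conjuncts to the corresponding conjunct for $(\sn{\tau}[\vartheta], \lambda, \rho)$, using the commutation of the automorphism components with the structural operations $\Pi^1$, $\Pi^2_{\on{N}}$, $\pre$, $\post$, $\pre_{\on{N}}$, $\post_{\on{N}}$. Concretely, write $\sn{\tau}' := \morph_{\sn{N}}(\sn{\tau})$, $\vartheta'(\on{N}) := \morph_{\on{N}}(\vartheta(\on{N}))$ (the object net component determined by Def.~\ref{def:eos-automorph}(2)), and $\lambda' := \morph_{\mathit{OS}}(\lambda)$, $\rho' := \morph_{\mathit{OS}}(\rho)$ following Def.~\ref{def:markedeos-automorph}.

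First I would establish two commutation lemmas as the technical core. (i) $\Pi^1(\morph_{\mathit{OS}}(\lambda)) = \morph_{\sn{N}}(\Pi^1(\lambda))$: this is immediate from the definitions, since $\morph_{\mathit{OS}}$ acts on $\sum_k \sn{p}_k[\on{M}_k]$ by relabelling the system-net place of each addend via $\morph_{\sn{N}}$ and the nested marking via $\morph_{d(\sn{p}_k)}$, and $\Pi^1$ just forgets the second component. (ii) $\Pi^2_{\on{N}}(\morph_{\mathit{OS}}(\lambda)) = \morph_{\on{N}}(\Pi^2_{\on{N}}(\lambda))$: here one uses that $\morph_{\sn{N}}$ preserves the typing $d$ (Def.~\ref{def:eos-automorph}(1)), so the indicator $\mathbf{1}_{\on{N}}(\sn{p}_k)$ in (\ref{eq:2}) is unchanged by relabelling the place, i.e.\ the set of addends contributing to type $\on{N}$ is preserved; each contributing net-token marking $\on{M}_k$ gets transformed by exactly $\morph_{\on{N}}$ (since $d(\sn{p}_k) = \on{N}$), and $\morph_{\on{N}}$ is a multiset homomorphism, so it commutes with the sum. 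Both are one-line verifications once the bookkeeping is set up.

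With these in hand, each conjunct of $\phi(\sn{\tau}'[\vartheta'], \lambda', \rho')$ follows mechanically: the first conjunct becomes $\morph_{\sn{N}}(\Pi^1(\lambda)) = \pre(\morph_{\sn{N}}(\sn{\tau}))$, which is exactly $\morph_{\sn{N}}$ applied to $\Pi^1(\lambda) = \pre(\sn{\tau})$ because $\morph_{\sn{N}}$ is a p/t automorphism of $\sn{N}$ (Def.~\ref{def:PT-morph}) and hence satisfies $\pre(\morph_{\sn{N}}(\sn{\tau})) = \morph_{\sn{N}}(\pre(\sn{\tau}))$; injectivity of $\morph_{\sn{N}}$ gives the equivalence, not just the implication. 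The second conjunct is symmetric, using $\post$. For the third and fourth conjuncts, apply $\morph_{\on{N}}$ (a p/t automorphism of $\on{N}$, hence monotone and additive on $\mathit{MS}(\on{P}_{\on{N}})$, and commuting with $\pre_{\on{N}}, \post_{\on{N}}$): $\Pi^2_{\on{N}}(\lambda') \ge \pre_{\on{N}}(\vartheta'(\on{N}))$ rewrites to $\morph_{\on{N}}(\Pi^2_{\on{N}}(\lambda)) \ge \morph_{\on{N}}(\pre_{\on{N}}(\vartheta(\on{N})))$, and a multiset automorphism (a permutation of $\on{P}_{\on{N}}$) preserves the multiset order $\le$; similarly the distribution equation in conjunct~4 is the $\morph_{\on{N}}$-image of the original equation, using additivity of $\morph_{\on{N}}$. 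One has to note here that $\morph_{\sn{N}}(\sn{\tau})$ is indeed a legal system-net transition for the event: if $\sn{\tau} = \sn{t}$ this is immediate; if $\sn{\tau} = \mathit{id}_{\sn{p}}$ one should remark that $\morph_{\mathit{OS}}$ naturally extends to idle transitions by $\morph_{\sn{N}}(\mathit{id}_{\sn{p}}) := \mathit{id}_{\morph_{\sn{N}}(\sn{p})}$, which is consistent with $\pre(\mathit{id}_{\sn{p}}) = \post(\mathit{id}_{\sn{p}}) = \sn{p}$ and with the typing-preservation constraint on synchronisation.

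I expect the only real obstacle to be notational rather than mathematical: making precise how $\morph_{\mathit{OS}}(\cdot)$ is "the congruent homomorphic application" to events (a pair $\sn{\tau}[\vartheta]$ whose second component is a function $\Nets \to \bigcup_N \mathit{MS}(\on{T}_N)$, transformed place- and net-net-component-wise) and to nested multisets, and checking that this is well-defined — in particular that the image event $\morph_{\sn{N}}(\sn{\tau})[\vartheta']$ really lies in $\Theta$, which is precisely what Def.~\ref{def:eos-automorph}(2) guarantees, and that it handles the idle-transition case. Once that definitional layer is fixed, the proof is a routine conjunct-by-conjunct rewriting, and it actually yields the stronger statement that $\phi$ is \emph{equivalent} under $\morph_{\mathit{OS}}$ (apply the same argument to $\morph_{\mathit{OS}}^{-1}$), which by Def.~\ref{def:eos-fire} immediately gives that $\morph_{\mathit{OS}}$ is a bisimulation on the reachability graph — presumably the intended corollary.
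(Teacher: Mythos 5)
Your proposal is correct and follows essentially the same route as the paper's own proof: a conjunct-by-conjunct verification of $\phi$ using the fact that $\morph_{\mathit{OS}}$ commutes with $\Pi^1$, $\Pi^2_{\on{N}}$, $\pre$, $\post$ and with the object-net pre/post functions, where typing preservation guarantees the indicator in $\Pi^2_{\on{N}}$ is unaffected. If anything, your explicit commutation identities (e.g.\ $\Pi^1(\morph_{\mathit{OS}}(\lambda)) = \morph_{\sn{N}}(\Pi^1(\lambda))$) make precise the chains of $\cong$ that the paper writes informally, and your remarks on idle transitions and on the converse direction are consistent refinements rather than a different argument.
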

\begin{proof}
  We check the properties of $\phi$ as defined by (\ref{eq:firepredicate}). 
  By assumption we have $ \Pi^1(    \lambda) = \pre(\sn{\tau})$.
  This implies
  \(
  \Pi^1(\morph_{\mathit{OS}}(\lambda))
  \cong
  \Pi^1(\lambda)
  = \pre(\sn{\tau}) 
  \cong
  \pre(\morph_{\mathit{OS}}(\sn{\tau})) 
  \).
  Analogously, we obtain
  $    
  \Pi^1(\morph_{\mathit{OS}}(\rho)) = \post(\morph_{\mathit{OS}}(\sn{\tau}))$.
  For the object-nets we have for the preset:
  \[
    \Pi^2_{\on{N}}(\morph_{\mathit{OS}}(\lambda))
    \cong
    \Pi^2_{\on{N}}(\lambda)
    \geq 
    \pre_{\on{N}}(\vartheta(\on{N}))
    \cong
    \pre_{\on{N}}(\morph_{\mathit{OS}}(\vartheta(\on{N})))
  \]
  and    for the generated net-tokens in the postset we have:
  \[
    \begin{array}{rc lll}
      \Pi^2_{\on{N}}(\morph_{\mathit{OS}}(\rho))  
      &\cong&
              \Pi^2_{\on{N}}(\rho)  &&
      \\
      &=&
          \Pi^2_{\on{N}}(\lambda) 
                                    &{} - \pre_{\on{N}}(\vartheta(\on{N})) 
      &{} + \post_{\on{N}}(\vartheta(\on{N})) 
      \\
      &\cong&\Pi^2_{\on{N}}(\morph_{\mathit{OS}}(\lambda)) 
                                    &{}- \pre_{\on{N}}(\morph_{\mathit{OS}}(\vartheta(\on{N})) )
      &{}+ \post_{\on{N}}(\morph_{\mathit{OS}}(\vartheta(\on{N}))) 
        \quad\hfill\qed   
    \end{array}   
  \]
\end{proof}

From this Lemma
we immediately obtain the invariance of the firing rule.
\begin{proposition}
  The firing rule is invariant w.r.t. an  automorphism $\morph_{\mathit{OS}}$: 
  
  \begin{equation}
    \label{eq:firepredmorf} 
    \begin{array}{rl}
      \mu
      \labtra{\sn{\tau}[\vartheta](\lambda, \rho)}{\mathit{OS}} 
      \mu'  
      &\, \;\impl\,\;
        \morph_{\mathit{OS}}(\mu)
        \labtra{\morph_{\mathit{OS}}(\sn{\tau}[\vartheta])(\morph_{\mathit{OS}}(\lambda), \morph_{\mathit{OS}}(\rho))}{\mathit{OS}} 
        \morph_{\mathit{OS}}(\mu')  
    \end{array}
  \end{equation}
\end{proposition}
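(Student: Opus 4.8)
The plan is to derive the Proposition directly from the preceding Lemma together with the definition of the firing rule (Def.~\ref{def:eos-fire}). Assume $\mu \labtra{\sn{\tau}[\vartheta](\lambda, \rho)}{\mathit{OS}} \mu'$. By Def.~\ref{def:eos-fire} this unpacks into two facts: first, $\lambda \sqsubseteq \mu$ and $\phi(\sn{\tau}[\vartheta], \lambda, \rho)$ hold; second, $\mu' = \mu - \lambda + \rho$. The goal is to establish the three corresponding facts for the $\morph_{\mathit{OS}}$-images, namely $\morph_{\mathit{OS}}(\lambda) \sqsubseteq \morph_{\mathit{OS}}(\mu)$, $\phi(\morph_{\mathit{OS}}(\sn{\tau}[\vartheta]), \morph_{\mathit{OS}}(\lambda), \morph_{\mathit{OS}}(\rho))$, and $\morph_{\mathit{OS}}(\mu') = \morph_{\mathit{OS}}(\mu) - \morph_{\mathit{OS}}(\lambda) + \morph_{\mathit{OS}}(\rho)$; one also needs that $\morph_{\mathit{OS}}(\sn{\tau}[\vartheta])$ is a genuine event in $\Theta$, which is exactly condition~2 of Def.~\ref{def:eos-automorph}.

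First I would handle the enabling predicate: the preservation of $\phi$ is precisely the content of the Lemma (equation~(\ref{eq:fireeqmorf})), so this step is immediate. Next I would argue that $\morph_{\mathit{OS}}$, being built from bijections $\morph_{\sn{N}}$ and the $\morph_{\on{N}}$, acts on nested markings as a bijection on $\mathcal{M}$ that is moreover a monoid homomorphism with respect to $+$ (it relabels each addend $\sn{p}_k[\on{M}_k]$ to $\morph_{\sn{N}}(\sn{p}_k)[\morph_{d(\sn{p}_k)}(\on{M}_k)]$, and condition~1 of Def.~\ref{def:eos-automorph} guarantees the relabelled pair is again type-consistent since $d$ is preserved). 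From homomorphy it follows that $\morph_{\mathit{OS}}$ commutes with multiset difference on the sub-marking order, hence $\lambda \sqsubseteq \mu$ gives $\morph_{\mathit{OS}}(\lambda) \sqsubseteq \morph_{\mathit{OS}}(\mu)$, and $\mu' = \mu - \lambda + \rho$ gives $\morph_{\mathit{OS}}(\mu') = \morph_{\mathit{OS}}(\mu) - \morph_{\mathit{OS}}(\lambda) + \morph_{\mathit{OS}}(\rho)$. Combining the three facts with Def.~\ref{def:eos-fire} yields the claimed transition.

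The main obstacle, such as it is, lies not in any deep argument but in the bookkeeping of what ``$\morph_{\mathit{OS}}(.)$'' means when applied to the various objects — events, net-tokens, nested multisets — and in checking that this extension is well-defined (i.e.\ lands in $\mathcal{M}$) and additive. In particular one must be careful that the image of an event under $\morph_{\mathit{OS}}$ is again in $\Theta$; this is where condition~2 of Def.~\ref{def:eos-automorph} is used, and it should be invoked explicitly rather than folded silently into the Lemma. Beyond that, everything is a routine consequence of bijectivity plus homomorphy, so the proof is short: state the three conditions from the firing rule, dispatch $\phi$ by the Lemma, dispatch the order and the arithmetic of $\mu'$ by homomorphy of $\morph_{\mathit{OS}}$, and conclude.
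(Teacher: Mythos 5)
Your proposal is correct and follows essentially the same route as the paper's own proof: the enabling predicate is dispatched by the preceding Lemma, while the sub-marking condition $\morph_{\mathit{OS}}(\lambda) \sqsubseteq \morph_{\mathit{OS}}(\mu)$ and the successor-marking equation are obtained from the fact that $\morph_{\mathit{OS}}$ acts on nested markings as a bijective multiset homomorphism. If anything, your version is slightly more explicit than the paper's (which writes these steps loosely via $\cong$), in particular in spelling out that condition~2 of Definition~\ref{def:eos-automorph} guarantees the image event lies in $\Theta$.
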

\begin{proof}
  We have to check the firing rule of Def.~\ref{def:eos-fire}.
  The event 
  $\morph_{\mathit{OS}}(\sn{\tau}[\vartheta])$
  is enabled in the mode
  $(\morph_{\mathit{OS}}(\lambda), \morph_{\mathit{OS}}(\rho))$
  since
  $
  \morph_{\mathit{OS}}(\lambda) \cong
  \lambda \sqsubseteq \mu
  \cong \morph_{\mathit{OS}}(\mu))$
  and
  $\phi(\morph_{\mathit{OS}}(\sn{\tau}[\vartheta]), \morph_{\mathit{OS}}(\lambda), \morph_{\mathit{OS}}(\rho))
  $
  holds, too.
  The    successor marking 
  is 
  \(
  \morph_{\mathit{OS}}(\mu')
  =
  \morph_{\mathit{OS}}(\mu)
  -\morph_{\mathit{OS}}(\lambda)
  + \morph_{\mathit{OS}}(\rho))
  \cong
  \mu - \lambda +  \rho
  =
  \mu'
  \).\hfill\qed
\end{proof}

\begin{example}
  In the following we give an example to illustrate our approach.
  Our \EOS{} models a kitchen 
  with two working stations $S_1$ and $S_2$, i.e., places
  where different activities can be executed.
  The whole scenario can be seen as a metaphor for flexible manufacturing systems, where the kitchen is the plant and the recipe is the workflow.

  We have a cook (corresponding to a robot) executing a simple recipe:
  First split eggs (action $a$);
  then, independently mix egg yolks with sugar (action $b$)
  and mix egg white with wine (action $c$);
  and, finally fill the white creme on top of pudding (action $d$).
  In process algebraic notation
  the recipe is denoted as $a; (b \| c); d$, where
  $\_;\_$ denotes sequential composition
  and $\_\|\_$ denotes an and-split (parallel execution).
  %
  %

  The corresponding \EOS{} is shown in Figure~\ref{fig:kitchen}.
  The system net has two places $S_1$ and $S_2$ for the kitchen stations.
  Each station has side transitions to execute those actions that are possible at each station.
  The cook/recipe can move freely between the two stations.
  We have only one object net that models the recipe.
  In the initial marking the recipe starts at station $S_1$.
  The symmetry of the locations
  is captured by the automorphism:
  \(
  \morph_{\mathit{OS}}(S_1) = S_2
  \),
  \(
  \morph_{\mathit{OS}}(p_1) = p_2
  \),
  and 
  \(
  \morph_{\mathit{OS}}(p_3) = p_4
  \).
  
  \begin{figure}[htbp]  \centerline{\includegraphics[width=0.8\textwidth]{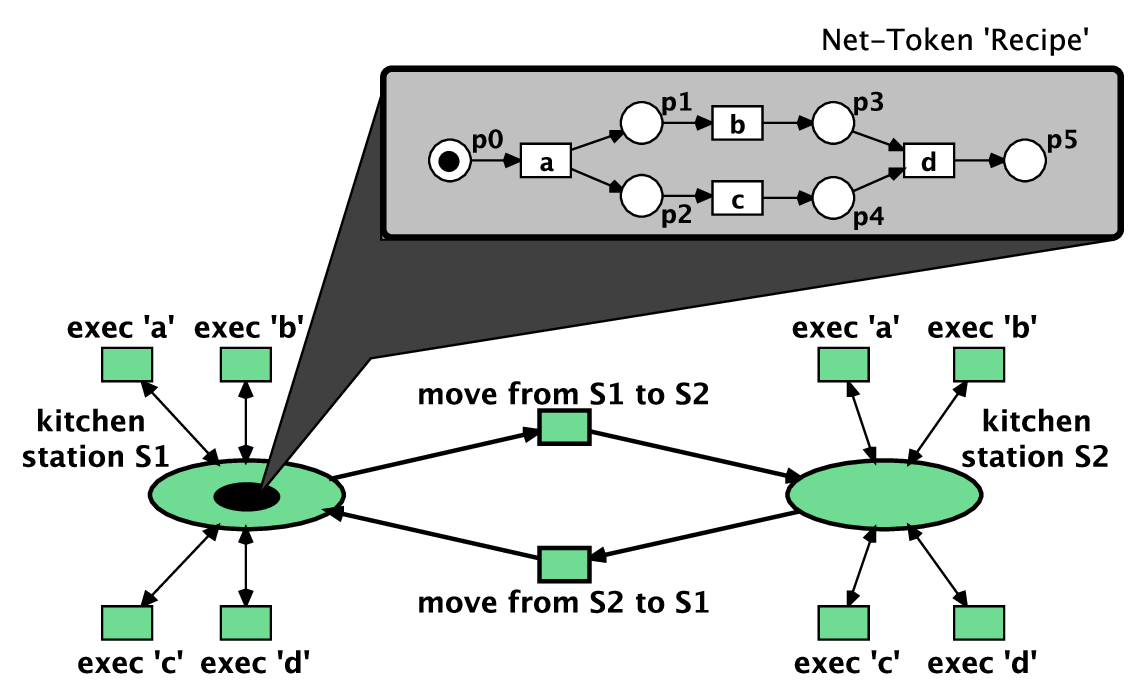}}
    \caption{\label{fig:kitchen} \EOS: A Kitchen with a Recipe as a Net-Token}
  \end{figure}%

  The resulting state space mainly describes the following execution:
  The recipe moves from kitchen station $S_1$
  where $a$ is possible to station $S_2$, where it is possible to execute both $b$
  and $c$ (in any order).
  Finally, the recipe moves back to 
  station $S_1$
  to execute  $d$.
  In between of all these steps the recipe may moves `erratically' between the two locations without making any progress to the recipe.

\end{example}

\subsection{Canonical Representations of \EOS{}}

Our basic idea is to provide some pseudo symbolic state representation based on automorphism.
We use canonical representatives, i.e.,
a most minimal representation
of an equivalence class (corresponding to an automorphism group) considering a total order on nodes: e.g., we assume that for any $p/t$ net $N$ there are two bijections $P_N \to \{1,\ldots,|P_N|\}$ and $ T_N \to \{1,\ldots,|T_N|\}$.
We derive the \EOS{}-automorphism
from the canonical representative of \key{p/t} components (net-tokens and system-net).

\begin{example}    
  Let us illustrate the main concepts using Figure \ref{fig:kitchen} (here $\sn{p}_1$ means $S_1$,
  $\sn{p}_2$ means $S_2$).
  The following markings:

  \[
    \mu  = \sn{p}_1[p_1 + p_4] + \sn{p}_1[p_2 + p_3] 
    + \sn{p}_2[p_2 + p_3]
  \]
  \[
    \mu'  = \sn{p}_1[p_2 + p_3] +
    \sn{p}_2[p_1 + p_4] + \sn{p}_2[p_2 + p_3]
  \]
  \noindent are actually automorphic, their canonical representative is (we use weights for clarity)
  \[
    \mu_c  = \sn{p}_1[p_1 + p_4] + 2\cdot \sn{p}_2[p_1 + p_4] 
  \]

  In this example, the state aggregation obtained through \EOS{} automorphism and canonization grows exponentially with the quantity of net-tokens.
\end{example}

We integrate the formalization of 
canonization for \EOS{}
along the results for Rewritable Petri Nets as given in  previous work \cite{Capra+2024-tcs,Capra:RP22}.
There we used the well established 
\texttt{Maude} tool \cite{all-about-maude} to define canonical representation of \emph{Rewritable} \key{p/t} Nets.
This formalization utilizes a multiset-based algebraic definition of mutable \key{p/t} nets. Whereas the method detailed in \cite{Capra:RP22} is generic (similar to other graph canonization techniques), the one outlined in \cite{Capra+2024-tcs} is specific to modular \key{p/t} nets created and modified using selected composition operators and a structured node labelling, making it significantly more efficient for symmetric models.

The two approaches can be seamlessly incorporated into the \EOS{} formalism within \texttt{Maude} \cite{enase23}, which uses a consistent representation of \key{p/t} terms.

In theory, it is known that generating
canonical representative is 
at least as complex as the graph automorphism problem (quasi polynomial).
We hope for a more efficient 
canonization since
for \EOS{} the \key{p/t} structure doesn't change: the canonical representative of a marking must retain the net structure, which
is a major source for  efficiency.
%
While the exact effect on efficiency is still being studied, the application of the method \cite{Capra+2024-tcs} appears to be promising, when \EOS{} components are built modularly.

Observe that the proposed canonization approach is pseudo-symbolic, because
from a canonical representative, we might reach equivalent markings through equivalent instances.
To approach entirely symbolic, canonical representative of firing modes should be calculated (using a uniform technique).

We finally like to mention that  in \cite{Koehler14-fi-lam}
we have considered another `in-built' symmetry of \EOS{}.
The nested multisets
$\alpha$ and $\beta$ that coincide in their projections give rise to
the so-called
\key{projection equivalence}:
\begin{equation}
  \label{eq:projection-equiv}
  \begin{array}{rcl}
    \alpha \cong \beta
    & :\iff &
               \Pi^1(\alpha) = \Pi^1(\beta) \land 
               \forall\on{N} \in \Nets: 
               \Pi^2_{\on{N}}(\alpha) = \Pi^2_{\on{N}}(\beta)          
  \end{array}
\end{equation}

\begin{lemma}[\cite{Koehler14-fi-lam}, Lem.~3.1]
  \label{lem:lem1-1}
  The enabling predicate is invariant with respect to 
  projection:
  \(
  \phi(\sn{\tau}[\vartheta], \lambda, \rho)
  \iff 
  (
  \forall \lambda', \rho':
  \lambda' \cong \lambda \land
  \rho' \cong \rho \impl
  \phi(\sn{\tau}[\vartheta], \lambda', \rho')
  )
  \)
\end{lemma}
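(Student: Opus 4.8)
The plan is to observe that the enabling predicate $\phi$ in (\ref{eq:firepredicate}) refers to its nested-multiset arguments $\lambda$ and $\rho$ \emph{only} through the projections $\Pi^1(\lambda)$, $\Pi^1(\rho)$, and $\Pi^2_{\on{N}}(\lambda)$, $\Pi^2_{\on{N}}(\rho)$ for $\on{N} \in \Nets$. Since projection-equivalent markings have, by (\ref{eq:projection-equiv}), identical values under all of these projections, $\phi$ cannot distinguish them. Hence the proof is essentially a substitution argument, with no genuinely hard step.

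First I would dispatch the direction ``$\Longleftarrow$'': each $\Pi^1$ and each $\Pi^2_{\on{N}}$ applied to a fixed marking yields a fixed result, so $\cong$ (as defined in (\ref{eq:projection-equiv})) is reflexive; we may therefore instantiate $\lambda' := \lambda$ and $\rho' := \rho$ on the right-hand side and read off $\phi(\sn{\tau}[\vartheta], \lambda, \rho)$ directly.

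For ``$\Longrightarrow$'', assume $\phi(\sn{\tau}[\vartheta], \lambda, \rho)$ and let $\lambda' \cong \lambda$, $\rho' \cong \rho$ be arbitrary. By (\ref{eq:projection-equiv}) we have $\Pi^1(\lambda') = \Pi^1(\lambda)$, $\Pi^1(\rho') = \Pi^1(\rho)$ and $\Pi^2_{\on{N}}(\lambda') = \Pi^2_{\on{N}}(\lambda)$, $\Pi^2_{\on{N}}(\rho') = \Pi^2_{\on{N}}(\rho)$ for every $\on{N} \in \Nets$. Substituting these equalities into the four conjuncts of $\phi(\sn{\tau}[\vartheta], \lambda, \rho)$ yields the four conjuncts of $\phi(\sn{\tau}[\vartheta], \lambda', \rho')$ verbatim: the first two because $\pre(\sn{\tau})$ and $\post(\sn{\tau})$ are untouched; the third because the inequality $\Pi^2_{\on{N}}(\lambda) \geq \pre_{\on{N}}(\vartheta(\on{N}))$ transports along the equality; and the fourth because it is an equation in which we simply replace the occurrences of $\Pi^2_{\on{N}}(\lambda)$ and $\Pi^2_{\on{N}}(\rho)$ by their primed counterparts.

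The only point requiring (minimal) care is to confirm that \emph{every} occurrence of $\lambda$ and $\rho$ in (\ref{eq:firepredicate}) is guarded by a projection — i.e. that $\phi$ never inspects the nesting structure of $\lambda, \rho$ beyond what $\Pi^1$ and $\Pi^2_{\on{N}}$ expose — which is immediate from the syntactic shape of (\ref{eq:firepredicate}). It is also worth noting that, since the right-hand side quantifies over \emph{all} representatives of the classes of $\lambda$ and $\rho$, the lemma simultaneously records that enabledness is a property of the pair of equivalence classes rather than of chosen representatives, which is the form in which it is used subsequently.
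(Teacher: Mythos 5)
Your proof is correct. The paper itself gives no proof of this lemma --- it is imported verbatim from the cited reference (Lem.~3.1 of the \EOS{} paper) --- so there is nothing to compare against line by line; but your argument is exactly the one the statement is designed to admit: the predicate $\phi$ in (\ref{eq:firepredicate}) mentions $\lambda$ and $\rho$ only underneath $\Pi^1$ and $\Pi^2_{\on{N}}$, and projection equivalence (\ref{eq:projection-equiv}) is precisely the statement that all of these values agree, so substitution closes both directions (with reflexivity of $\cong$ handling ``$\Longleftarrow$''). Your closing remark --- that the biconditional form records enabledness as a property of equivalence classes rather than representatives --- is also the right way to read how the lemma is used for state-space reduction.
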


This equivalence will give rise to additional
state space reductions.

\section{Conclusion}

In this paper, we have 
extended the concepts of
automorphisms
for Nets-within-Nets, here: \EOS{}.
In combination
with canonical forms for representing the markings we have obtained a potential significant reduction of
the size of the state space. We have integrated the approach in a \texttt{Maude} encoding of \EOS{}. 

In future work, we plan to investigate the degree of state space reduction for a broader set of case studies.
We further hypothesize that the automorphism concept for \EOS{} can be readily extended to Object-nets with any level of net and channel nesting, allowing net-tokens to flow between different levels.
We also aim to explore the practicality of less restrictive partial symmetry concepts.

\bibliographystyle{splncs04}

\bibliography{mk/defs,mk/koehler,mk/agent,mk/new,mk/eigenes,mk/konferenzen,lc}

\clearpage
\appendix

\end{document}